\documentclass[10pt,journal]{IEEEtran}
\usepackage{cite}
\usepackage{amsmath,amssymb,amsthm}
\usepackage{graphicx}
\usepackage[caption=false,font=footnotesize]{subfig}
\usepackage{booktabs,array,multirow}
\usepackage{algorithmic}
\usepackage{url}
\usepackage{microtype}
\usepackage[hidelinks]{hyperref}
\graphicspath{{figures/}}

\newtheorem{theorem}{Theorem}
\newcommand{\rtm}{\ensuremath{(R,T,M)}}
\newcommand{\refcost}{\ensuremath{(R_{\rm ref},T_{\rm ref},M_{\rm ref})}}
\newcommand{\DPQA}{OLSQ-DPQA}
\newcommand{\figwide}[3]{\begin{figure*}[t]\centering\includegraphics[width=\textwidth]{#1}\caption{#2}\label{#3}\end{figure*}}
\newcommand{\figsingle}[3]{\begin{figure}[t]\centering\includegraphics[width=\columnwidth]{#1}\caption{#2}\label{#3}\end{figure}}

\begin{document}
\raggedbottom
\title{NAKO: A Benchmarking Framework for Evaluating and Diagnosing Neutral-Atom Quantum Compilers}
\author{Wenjie~Sun, Zhigang~Wang, Xiaoyu~Li, Desheng~Zheng, Lianhui~Yu, Geng~Chen, and Guowu~Yang%
\thanks{(Corresponding author: Zhigang Wang.)}%
\thanks{Wenjie Sun and Zhigang Wang are with the School of Electronic Science and Engineering, University of Electronic Science and Technology of China, No.~2006, Xiyuan Ave., Chengdu, Sichuan 611731, China (e-mail: 202211022528@std.uestc.edu.cn; zhigangwang@uestc.edu.cn).}%
\thanks{Xiaoyu Li is with the School of Information and Software Engineering, University of Electronic Science and Technology of China, No.~2006, Xiyuan Ave., Chengdu, Sichuan 611731, China (e-mail: xiaoyuuestc@uestc.edu.cn).}%
\thanks{Desheng Zheng is with the School of Computer Science and Software Engineering, Southwest Petroleum University, No.~8, Xindu Road, Chengdu, Sichuan 610500, China (e-mail: zheng\_de\_sheng@163.com).}%
\thanks{Lianhui Yu, Geng Chen, and Guowu Yang are with the School of Computer Science and Engineering, University of Electronic Science and Technology of China, No.~2006, Xiyuan Ave., Chengdu, Sichuan 611731, China (e-mail: 202222120303@std.uestc.edu.cn; 202312081614@std.uestc.edu.cn; guowu@uestc.edu.cn).}}
\maketitle

\begin{abstract}
Reconfigurable neutral-atom processors couple gate scheduling, atom placement, and parallel atom transport, so compiler decisions jointly determine physical execution costs. Evaluating these compilers requires attainable resource references and a way to trace excess costs to their causes. We present NAKO, a benchmarking framework that parameterizes circuits by qubit count, logical depth, and requested one- and two-qubit gate densities, and assigns attainable targets for Rydberg stages, routing intervals, and movement layers. Each benchmark is paired with an executable reference schedule; under the stated execution model, construction guarantees minimum Rydberg-stage and routing-interval counts, and minimum movement among schedules with the minimum Rydberg-stage count. Comparing compiler outputs with these references quantifies resource gaps, while reference mappings and intermediate physical states enable controlled interventions that trace excess costs to specific compiler decisions. In a 648-run evaluation of Enola and OLSQ-DPQA, NAKO identifies distinct bottlenecks. For Enola, placement scoring and cross-stage layout management contribute to excess movement; diagnosis-guided interventions reduce mean movement by 78.9\% across five matched circuits. For OLSQ-DPQA, premature candidate selection and limited exploration of preparatory states produce avoidable entangling stages and routing intervals; four fixed-checkpoint continuations each eliminate one Rydberg stage and one routing interval without increasing movement. NAKO thus links resource targets backed by constructive optimality guarantees to controlled diagnosis and validation of compiler-specific resource savings.
\end{abstract}

\begin{IEEEkeywords}
Neutral-atom quantum computing, compiler evaluation, benchmark generation, layout synthesis, atom routing, compiler diagnosis.
\end{IEEEkeywords}

\section{Introduction}
\label{sec:introduction}
\IEEEPARstart{R}{econfigurable} neutral-atom processors combine scalable atomic arrays with connectivity that can change during execution. Progress in defect-free array assembly and programmable many-body systems has established the physical basis for arranging and controlling large registers~\cite{barredo2016,browaeys2020,ebadi2021}. Parallel Rydberg gates and multi-qubit experiments have advanced their computational capabilities~\cite{levine2019,graham2019,graham2022,evered2023}, while coherent transport enables entangled qubits to move between interaction configurations~\cite{bluvstein2022}. Logical processors and fault-tolerant architectural proposals further increase the importance of coordinating operations across these configurations~\cite{bluvstein2024,xu2024}. Translating this flexibility into efficient execution depends on the compiler: unnecessary entangling pulses and atom transport consume time and coherence that could support useful computation.

Quantum compilation has long addressed the gap between logical interactions and physical connectivity. Placement and allocation methods establish logical-to-physical assignments~\cite{maslov2008,siraichi2018}; heuristic routing, exact mapping, and scalable layout synthesis optimize the additional operations required by restricted connections~\cite{li2019sabre,wille2019,tan2020,lin2023}. Execution-time optimization and noise-adaptive mapping show that the appropriate objective also depends on operation durations and device characteristics~\cite{zhang2021,murali2019noise}. Full-stack comparisons demonstrate the resulting interaction between architectural properties and compiler choices~\cite{murali2019full}. For neutral atoms, long-range interactions, topology displacement, and native multi-qubit operations introduce additional opportunities~\cite{baker2021,brandhofer2021,patel2022}. Recent compilers coordinate dynamic placement, parallel movement, and gate scheduling through constraint solving, decomposed optimization, and mobile ancillas~\cite{tan2024olsq,wang2024atomique,wang2024qpilot,tan2025enola}. Evaluating these methods requires benchmarks that expose the physical costs of their decisions.

The first challenge is to establish an attainable performance reference for each workload. Application benchmarks provide important coverage: SupermarQ supplies scalable workloads and structural features~\cite{tomesh2022}; QASMBench collects low-level circuits and characterization metrics~\cite{li2023qasm}; and MQT Bench supplies consistent workloads across compilation abstraction levels~\cite{quetschlich2023}. On such workloads, observed differences show which compiler performs better under the tested conditions. Quantifying the remaining improvement opportunity additionally requires knowledge of what execution costs the circuit can attain. This need becomes particularly acute for neutral atoms, where fewer entangling stages can coexist with more transport. A reference must capture multiple physical resources and demonstrate that their combination is realizable.

The second challenge is to diagnose the compiler decisions behind an observed resource gap. Excess movement may arise from initial placement, loss of reusable configurations, or overly fragmented routing batches, while additional entangling stages may result from limited parallelism or search that misses useful preparatory motion. Aggregate costs reveal the symptom but not its cause; controlled comparisons under the same circuit conditions are needed to isolate the responsible decision and determine whether improvement requires a better local objective or a richer search space.

Constructive benchmarks address the reference problem by embedding solutions with known properties. QUEKO provides circuits with known optimal depth and gate counts and mappings requiring no inserted SWAPs~\cite{tan2021}; QKNOB embeds low-overhead transformations to evaluate SWAP and depth overhead~\cite{li2025qknob}; and QUBIKOS provides provably optimal nonzero SWAP costs and analyzes suboptimal routing decisions~\cite{ping2025}. These frameworks target fixed coupling graphs, as found in superconducting processors. For neutral atoms, Khusainov et al. unify compiler-output representations and evaluation models for cross-toolchain comparison~\cite{khusainov2026}, while Gao et al. develop linear and zigzag compilation strategies for the quantum Fourier transform (QFT), with analytical bounds on large atom movements~\cite{gao2026qft}.

Taken together, these efforts leave a need for a general framework that connects resource evaluation with compiler diagnosis under reconfigurable neutral-atom execution. Gate scheduling, atom placement, and parallel transport jointly determine physical execution costs, so a benchmark must provide not only an executable reference with jointly attainable targets for Rydberg stages, routing intervals, and movement layers, but also physical states that enable controlled investigation of the decisions behind resource gaps. Without these two capabilities, compiler outputs can be compared or individual resource bounds can be established, but systematically quantifying avoidable overhead across coupled resources and tracing it to specific compiler decisions remains difficult.

We present NAKO, a benchmarking framework designed to address this need. NAKO constructs circuits parameterized by qubit count, logical depth, and requested one- and two-qubit gate densities, together with a target resource tuple $(R,T,M)$. Each circuit is accompanied by an executable physical reference schedule that attains this tuple, with constructive guarantees of minimum Rydberg-stage and routing-interval counts and minimum movement among schedules with the minimum Rydberg-stage count. NAKO also provides a reference-guided workflow comprising resource comparison, controlled intervention, and execution validation. Its reference layouts support initial-mapping comparisons for identical circuits, while alternative gate and movement sequences are examined from the tested compiler's own intermediate state. Figure~\ref{fig:framework} summarizes the framework, and Section~\ref{sec:method} develops its construction, resource guarantees, and diagnostic workflow.

\figsingle{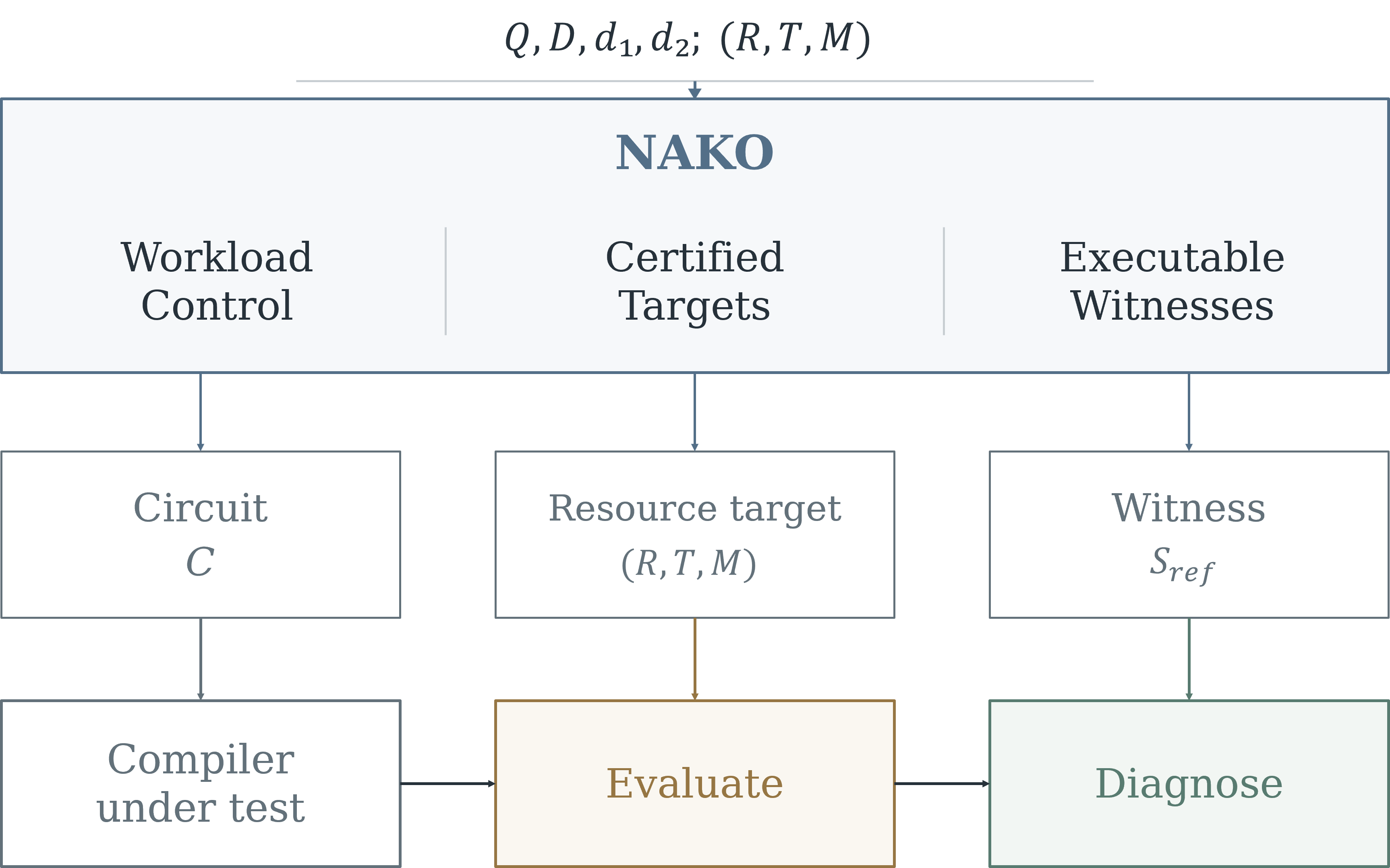}{Overview of NAKO. Given workload and resource parameters, NAKO constructs a benchmark circuit $C$, a certified resource target $(R,T,M)$, and an executable physical witness $S_{\rm ref}$. The compiler under test compiles $C$; NAKO compares the resulting schedule with the certified target to quantify resource gaps, and uses $S_{\rm ref}$ to diagnose the compiler decisions responsible for those gaps.}{fig:framework}

The contributions are as follows.
\begin{itemize}
\item \textbf{Parameterized circuit construction.} NAKO generates circuits with configurable qubit counts, depths, and gate densities, together with prescribed Rydberg-stage, routing-interval, and movement-layer targets.
\item \textbf{Constructive optimality guarantees.} An eight-qubit backbone and three composable primitives yield executable schedules and proofs of minimum Rydberg and routing counts, with minimum movement among minimum-Rydberg schedules.
\item \textbf{Scalable generation.} Direct construction produces 2000-qubit benchmarks in 4.97~s on average, enabling large-scale compiler testing.
\item \textbf{Demonstrated compiler improvements.} Evaluation across 648 runs guides targeted interventions in Enola and Optimal Layout Synthesis for Quantum Computing on dynamically field-programmable qubit arrays (OLSQ-DPQA), including a 78.9\% reduction in Enola's mean movement over five matched circuits.
\end{itemize}

\section{Background and Resource Metrics}
\label{sec:architecture}
As discussed in Section~\ref{sec:introduction}, comparing compiler outputs alone leaves unclear how much physical resource overhead is avoidable and which decisions cause it. For neutral atoms, resolving these questions requires accounting for both gate execution and the atom arrangements and movements that enable it. This section introduces the compilation process, its physical constraints, and the resource metrics used by NAKO. Figure~\ref{fig:architecture} connects the constraints to the execution events counted by these metrics.

\figwide{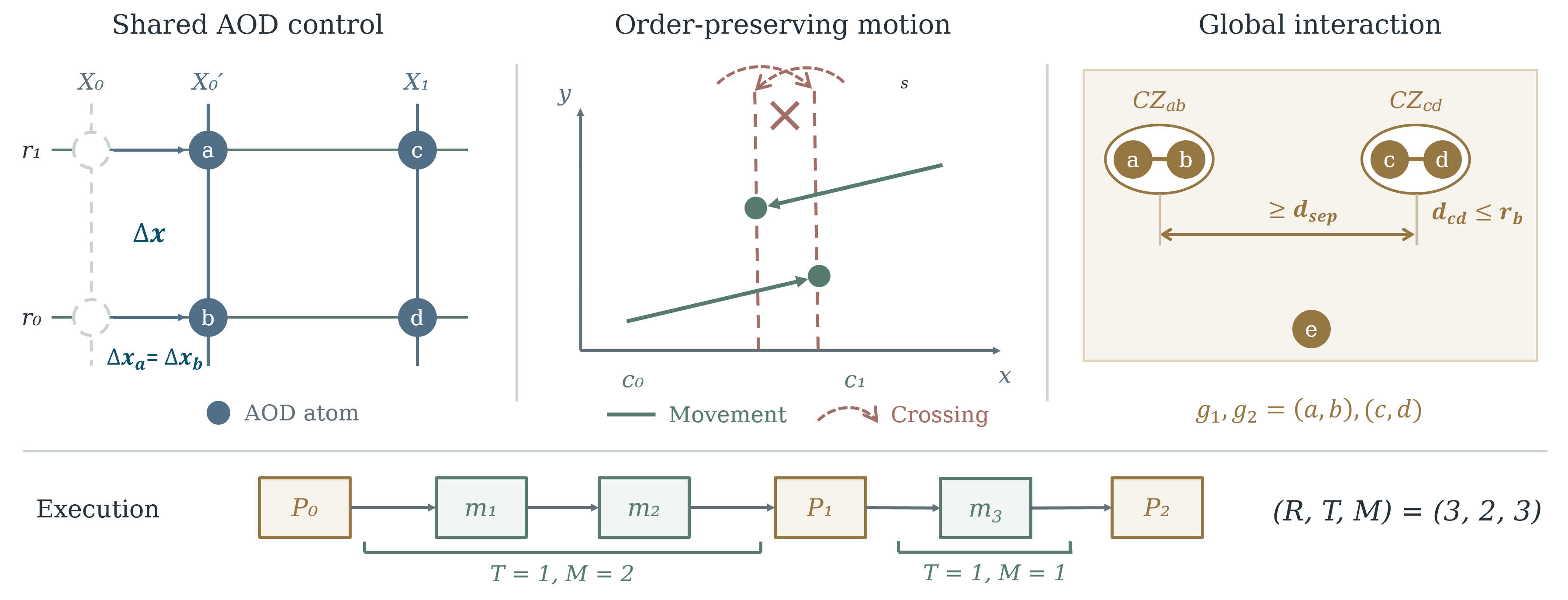}{Physical constraints and resource accounting for reconfigurable neutral-atom execution. Left: atoms sharing an AOD column move together; dashed circles mark their previous positions. Center: the illustrated movements reverse column order and conflict within one AOD array. Right: a global pulse executes CZ gates on the isolated pairs $(a,b)$ and $(c,d)$. Bottom: two routing intervals contain two and one movement layers, respectively, giving $(R,T,M)=(3,2,3)$. Geometry and event spacing are schematic. The constraints follow~\cite{tan2024olsq,tan2025enola,wang2024atomique,wang2024qpilot}.}{fig:architecture}

\subsection{From Logical Gates to Physical Execution}
The input circuit specifies operations on $Q$ qubits and their dependencies. Its logical depth $D$ counts dependency layers, while $d_1$ and $d_2$ describe the requested single- and two-qubit gate densities. A gate becomes ready when its predecessors have executed; disjoint ready gates are candidates for parallel execution. Throughout our compiler experiments, gates sharing a qubit retain their input order.

Executing these candidates on neutral atoms requires an additional spatial decision. Each qubit is carried by an atom held in an optical trap. A spatial light modulator (SLM) supplies stationary traps, while acousto-optic deflectors (AODs) supply movable traps. AOD motion transports trapped atoms while preserving their quantum states. Transfers between aligned static and mobile traps change which atoms are carried by the AOD~\cite{bluvstein2022,tan2024olsq,tan2025enola}. The compiler therefore selects a gate batch, arranges its atoms, and routes between successive arrangements. These choices produce a physical schedule containing gates, trap transfers, and movements. The constraints below determine which such schedules are feasible.

\subsection{Constraints on Physical Execution}
\textit{Shared motion and preserved order.}
A crossed AOD forms traps at intersections of controlled rows and columns. An atom $q$ in row $r(q)$ and column $c(q)$ has coordinates
\begin{equation}
x_q(t)=X_{c(q)}(t),\qquad y_q(t)=Y_{r(q)}(t).
\end{equation}
All atoms carried by a column share its horizontal displacement; those carried by a row share its vertical displacement. In the left view of Fig.~\ref{fig:architecture}, shifting one column moves both $a$ and $b$. The compiler must therefore group compatible motions and use transfers when a different grouping is needed.

For atoms carried by the same AOD array in one movement layer, shared control and order preservation require~\cite{tan2024olsq,tan2025enola,wang2024atomique,wang2024qpilot}
\begin{equation}
\operatorname{sgn}(u_0^{s}-u_1^{s})
=\operatorname{sgn}(u_0^{e}-u_1^{e}),\quad u\in\{x,y\},
\end{equation}
where superscripts $s$ and $e$ denote the start and end coordinates. Thus, atoms sharing a row or column remain aligned, and distinct rows and columns retain their relative order. The center view of Fig.~\ref{fig:architecture} illustrates a conflict: the proposed destinations reverse the atoms' horizontal order. Distinct rows and columns also maintain a minimum physical separation $d_s$ throughout motion. Discretized site coordinates may group nearby traps into the same site, while their physical traps remain distinct and ordered. Occupancy and alignment constraints govern loading and unloading.

\textit{Geometry-dependent, global interactions.}
Once the atoms are arranged, a Rydberg pulse realizes the selected two-qubit operations. In the global-pulse model considered here, an isolated pair within the Rydberg interaction range $r_b$ undergoes a controlled-$Z$ (CZ) gate. Other atoms must be sufficiently separated to suppress unwanted interactions; denote this separation by $d_{\rm sep}$. The cited compilation models use a separation on the order of $2.5r_b$~\cite{tan2024olsq,tan2025enola,wang2024qpilot}. Single-qubit gates can be individually addressed between Rydberg stages~\cite{tan2024olsq,tan2025enola}.

The pulse illuminates the entire modeled interaction region and executes a CZ on every isolated close pair. In the right view of Fig.~\ref{fig:architecture}, the pairs $(a,b)$ and $(c,d)$ lie within interaction range and are sufficiently separated from other atoms, allowing both gates to execute in one pulse. The isolated atom $e$ is also illuminated and participates in no two-qubit gate. A valid pulse requires the scheduled gate pairs to match the physical interaction pairs throughout the illuminated region. This couples gate scheduling to the complete atom placement.

We represent this geometry by separated interaction sites, each with two distinct atom slots. A doubly occupied site realizes a pair; different sites separate independent pairs and isolated atoms. This abstraction follows the site discretization used in the compiler literature~\cite{tan2024olsq,tan2025enola}. Together, the motion and interaction constraints explain why a logically parallel batch can require several transport steps before one pulse can execute it.

Geometry also determines when a layout can be reused: successive pulses at an unchanged configuration repeat its entire pair matching. Reuse is feasible when the remaining program calls for those gates; a different matching requires the interaction geometry to change.

\subsection{From Physical Execution to Resource Cost}
The same constraints create distinct sources of overhead. Additional pulses expose atoms to further gate and excitation errors; routing and trap transfers consume execution time and contribute to the physical error budget~\cite{tan2025enola}. A single logical-depth measure therefore leaves important execution costs unresolved. For a physical schedule $S$, we record:
\begin{itemize}
\item $R(S)$: the number of nonempty Rydberg entangling pulses.
\item $T(S)$: the number of intervals between consecutive nonempty pulses that contain at least one nonzero movement.
\item $M(S)$: the total number of nonzero movement layers, including movements before the first or after the last pulse.
\end{itemize}
A movement layer can transport multiple atoms simultaneously. In Fig.~\ref{fig:architecture}, the $P_0$--$P_1$ interval contains $m_1,m_2$, and the $P_1$--$P_2$ interval contains $m_3$. Their contributions are $(\Delta T,\Delta M)=(1,2)$ and $(1,1)$, respectively, yielding $(R,T,M)=(3,2,3)$. Thus, $T$ measures the frequency of routing between entangling stages, and $M$ counts the movement layers accumulated across the schedule. Trap transfers remain explicit physical operations in the execution model.

\begin{table}[t]
\caption{Notation and measurement quantities.}
\label{tab:notation}\centering\small
\begin{tabular}{@{}ll@{}}\toprule
Symbol & Meaning\\\midrule
$Q,D$ & Logical width and circuit depth\\
$d_1,d_2$ & Requested single- and two-qubit gate densities\\
$R,T,M$ & Rydberg stages, routing intervals, movements\\
$n_{100},n_{111},n_{223}$ & Appended primitive counts\\
$S_{\rm ref}$ & Constructed physical reference schedule\\
$\rho_X=X_c/X_{\rm ref}$ & Compiler-to-reference resource ratio\\
$\Delta X=X_c-X_{\rm ref}$ & Absolute resource excess\\\bottomrule
\end{tabular}
\end{table}

We compare each compiler output with the reference for the same input through $\rho_X$ and $\Delta X$ for $X\in\{R,T,M\}$ (Table~\ref{tab:notation}). Enola exposes movement through nonzero moves in its expanded instruction stream, including transport associated with loading and unloading. OLSQ-DPQA exposes movement through coordinate-changing transitions between encoded states. The interpretation retains these implementation-specific granularities. Paired interventions use the same compiler's constraints and accounting; duration and estimated fidelity provide additional comparisons under its unchanged physical model.

\subsection{Motivation for Benchmark Construction}
Gate scheduling, atom placement, and transport jointly determine the physical costs defined above. The resulting $R$, $T$, and $M$ values quantify resource use, but comparing these values across compilers leaves two questions unresolved: how much overhead is avoidable, and which decisions produce it. Answering the first requires resource targets that a valid schedule can attain under the coupled motion and global-pulse constraints. Answering the second requires controlled comparisons that connect the measured gap to specific compiler decisions.

These requirements motivate constructing benchmark circuits together with their physical reference schedules. NAKO follows this approach to combine controllable workloads with attainable resource targets. Given a qubit count, logical depth, requested single- and two-qubit gate densities, and a feasible target tuple \rtm, NAKO produces a circuit in Quantum Assembly Language (QASM) and an executable schedule attaining that tuple. Under the dependency-preserving execution model considered here, $R$ and $T$ are minimal, and $M$ is minimal among executions with minimum $R$. Direct construction also supports fast generation for parameter sweeps and large compiler tests.

By pairing each circuit with known attainable resource targets, NAKO provides a concrete reference for quantifying resource gaps in compiler outputs. Its executable reference schedules also supply layouts and execution information for controlled tests of compiler decisions. Through this framework, we aim to identify avoidable overhead and guide targeted compiler improvements.

\section{NAKO Benchmarking Framework}
\label{sec:method}
\subsection{Overview}
NAKO connects three stages: construction supplies a circuit and its physical reference; evaluation measures resource gaps and sensitivity; diagnosis tests the compilation decisions suggested by those measurements. The compiler under test receives the logical circuit and its hardware configuration. Resource targets and witness annotations remain available to the evaluation procedure, with initial mappings supplied explicitly in the paired mapping study.

Figure~\ref{fig:generation} shows how NAKO constructs a circuit with the prescribed resources. It first builds an eight-qubit core, called the backbone, whose ordered CZ gates establish the resource bounds under the stated execution conditions. Small reusable construction blocks, called primitives, extend this core with known increments in \rtm; the requested tuple determines how many instances of each primitive are needed. Additional gates, called filler gates, adjust circuit depth and gate load while preserving the resource targets. The required atom movements are recorded during construction, so the logical circuit and its physical reference schedule are produced together.

\figwide{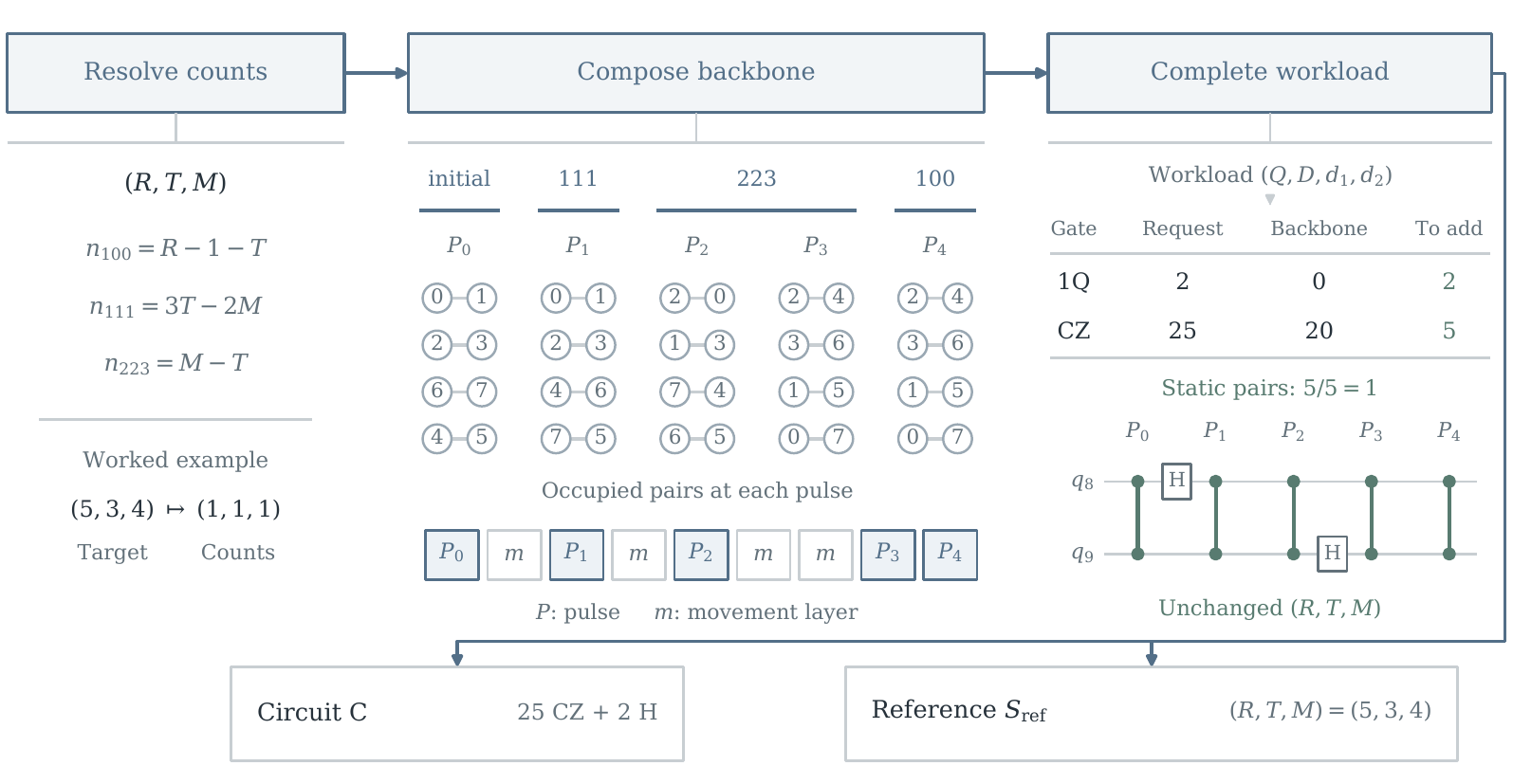}{Construction with target $(R,T,M)=(5,3,4)$. Initial matching, $111$, $223$, and $100$ produce five pulses with movement counts $1,1,2,0$ between them. Workload parameters determine the requested gate counts. In this example, subtracting the backbone's 20 CZ gates from the request of 25 leaves five CZ gates to add; one static pair supplies these across the five pulses. The single-qubit request is completed with two Hadamard (H) gates. The resulting circuit contains 25 CZ gates and two H gates, and its reference retains $(R,T,M)=(5,3,4)$. Numbered pairs show backbone states; event widths indicate order.}{fig:generation}

\subsection{Backbone and Primitive Interfaces}
The backbone provides the gate dependencies used to establish the resource bounds. It uses eight qubits, each participating in one CZ gate per backbone stage. Initially, these qubits form four disjoint pairs, with matching
\begin{equation}
P_0=\{(0,1),(2,3),(4,5),(6,7)\}.
\end{equation}
Every subsequent backbone stage is also a perfect matching on these qubits: four disjoint CZ gates cover all eight qubits. Each backbone qubit therefore follows a chain of $R$ ordered CZ gates. Preserving this order requires at least $R$ nonempty entangling stages, and the reference schedule attains this bound.

To connect successive primitives, the constructor must specify the physical state from which each primitive starts. We call this state its interface: the occupied pair sites and the qubit-to-slot assignments within them. Each primitive supplies the gates and movements leading to its output interface. When a template is reused, its qubit roles are assigned to the qubits currently occupying the input slots.

The first row reserves eight columns as movement workspace for the primitives. These eight sites contain the eight backbone atoms; other atoms occupy sites outside this region. With two slots per site, the remaining sites accommodate $2(rc-8)$ atoms, giving a total capacity of $8+2(rc-8)=2rc-8$ qubits.

\begin{figure*}[t]
\centering
\subfloat[$100$: retained matching.]{\includegraphics[width=.269\textwidth]{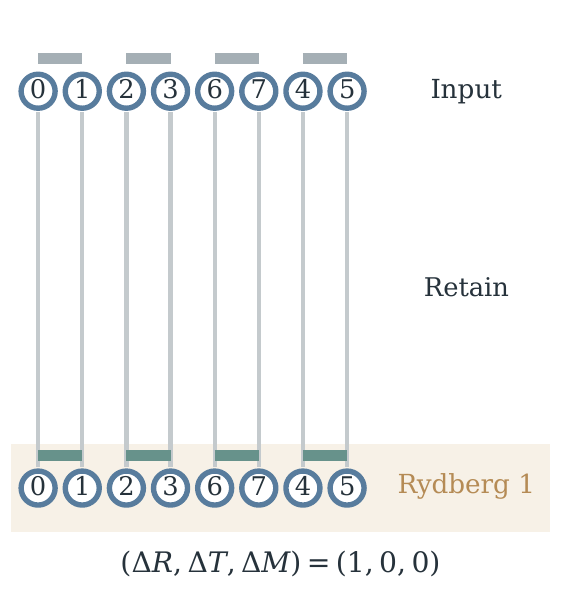}}\hfill
\subfloat[$111$: one movement.]{\includegraphics[width=.297\textwidth]{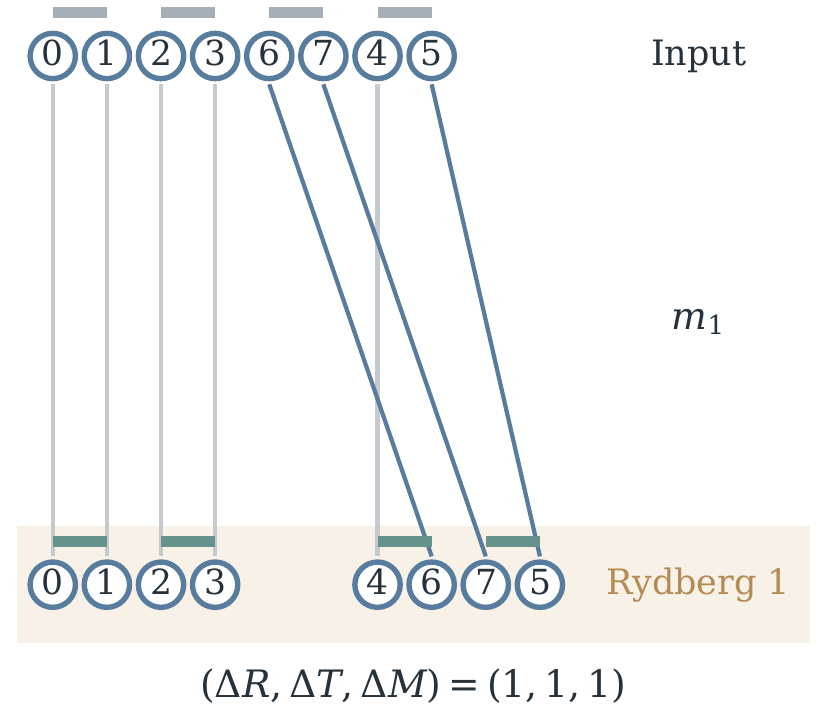}}\hfill
\subfloat[$223$: three movements.]{\includegraphics[width=.410\textwidth]{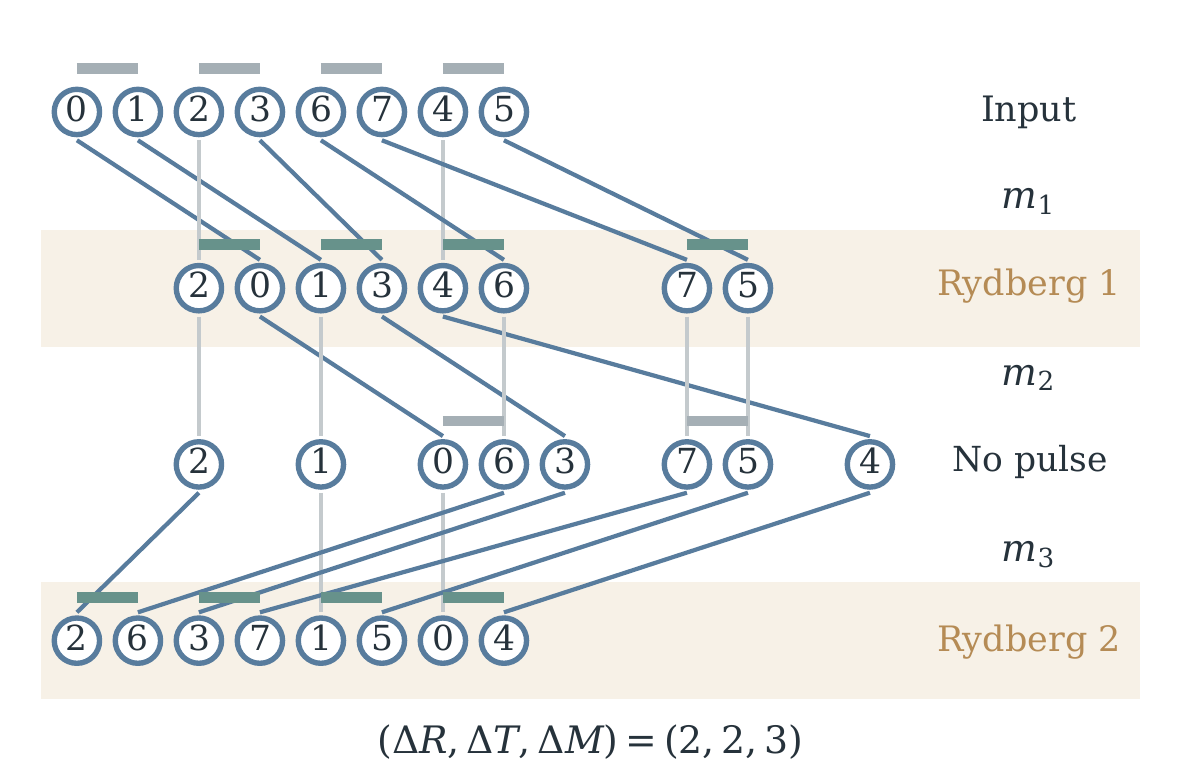}}
\caption{Primitive execution witnesses with time advancing downward. Numbered circles identify backbone qubits, and connecting trajectories show coordinate changes. Input denotes the inherited configuration. Shaded rows labeled Rydberg mark newly executed entangling stages; green pair marks indicate their CZ gates. Gray pair marks show co-location in the input and intermediate configurations. The $223$ primitive executes Rydberg stages after $m_1$ and $m_3$; its intermediate configuration after $m_2$ has no pulse. The $100$ and $111$ primitives each add one Rydberg stage.}
\label{fig:primitives}
\end{figure*}

\subsection{Three Composable Primitives}
The three primitives in Fig.~\ref{fig:primitives} extend the backbone with different resource increments. Their names encode $(\Delta R,\Delta T,\Delta M)$ relative to the input state. They respectively repeat a pairing without movement, change partners in one movement, and realize a two-interval segment requiring three movements. Their resource vectors are
\begin{equation}
v_{100}=(1,0,0),\quad v_{111}=(1,1,1),\quad v_{223}=(2,2,3).
\end{equation}
The $100$ primitive repeats the current matching without moving atoms. It adds one pulse and preserves the entire placement.

The $111$ primitive rematches the two rightmost occupied sites. For pairs $(a,b)$ and $(c,d)$, a right-directed instance produces $(c,a)$ and $(b,d)$. Three atoms move with preserved relative order, while one remains fixed. Reflected instances alternate the direction to keep the workspace bounded. A changed partner requires a change of geometry; one explicit movement attains the resulting local lower bound.

The $223$ primitive allows $M$ to exceed $T$ by adding two routing intervals with three movement layers. Labels $0,\ldots,7$ below denote template roles assigned to the current input pairs. The primitive executes two new matchings:
\begin{align}
L_0&=\{(0,2),(1,3),(4,6),(5,7)\},\\
L_1&=\{(0,4),(1,5),(2,6),(3,7)\}.
\end{align}
One movement realizes $L_0$, followed by two movements that realize $L_1$. The final occupied site set equals the input site set, allowing the interface to be reused.

The movement bound is encoded as a satisfiability modulo theories (SMT) problem and checked with Z3~\cite{demoura2008}. Integer site coordinates, binary slot assignments, occupancy constraints, and order-preserving simultaneous motion define the state space. Initial pair sites and their slot orientations are free variables; the two complete output matchings execute in order. With at most two movement layers, the unbounded-coordinate formula is unsatisfiable. A three-movement witness is satisfiable with a reused final site set. These outcomes establish the local movement lower bound and its attainment.

\begin{table}[t]
\caption{Primitive contributions and local certificates. SAT and UNSAT denote satisfiable and unsatisfiable formulas.}
\label{tab:primitives}\centering\small
\begin{tabular}{@{}lccc l@{}}\toprule
Primitive & $\Delta R$ & $\Delta T$ & $\Delta M$ & Attainment\\\midrule
$100$ & 1 & 0 & 0 & Unchanged placement\\
$111$ & 1 & 1 & 1 & Order-preserving rematching\\
$223$ & 2 & 2 & 3 & $K=2$: UNSAT; $K=3$: SAT\\\bottomrule
\end{tabular}
\end{table}

\subsection{Composition and Resource Guarantees}
To realize a requested tuple, NAKO determines the number of instances of each primitive. Let $n_{100}$, $n_{111}$, and $n_{223}$ denote these counts. After the initial pulse, the constructor applies all $111$ instances, then $223$, and finally $100$. The $111$ prefix keeps its partner changes on the same four qubits, as used in the routing lower bound below; the $100$ suffix adds pulses without further routing. Including the initial pulse's contribution $(1,0,0)$ gives
\begin{align}
R_{\rm ref}&=1+n_{100}+n_{111}+2n_{223},\\
T_{\rm ref}&=n_{111}+2n_{223},\\
M_{\rm ref}&=n_{111}+3n_{223}.
\end{align}
Solving these equations gives the unique multiplicities
\begin{align}
n_{223}&=M-T,& n_{111}&=3T-2M,\nonumber\\
n_{100}&=R-1-T.&&
\label{eq:composition}
\end{align}
The feasible integer targets obey
\begin{equation}
T\leq M\leq 3T/2,\qquad R\geq T+1.
\end{equation}
For example, target $(16,12,18)$ consists of the initial matching, three $100$ primitives, and six $223$ primitives. Workload completion adds static CZ pairs to these same pulses without moving them, while single-qubit gates preserve the backbone CZ order. Section~\ref{sec:filler} details this extension.

\begin{theorem}[Constructive resource optimality]
For a feasible target and accepted grid, the generated schedule attains \rtm. Among dependency-preserving executions, it minimizes the Rydberg-stage count and routing-interval count. Among executions with the minimum Rydberg-stage count, it also minimizes the movement-layer count.
\end{theorem}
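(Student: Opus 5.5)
The proof has three parts: attainment, lower bounds on $R$ and $T$, and a lower bound on $M$ among executions with the minimum $R$. For \emph{attainment}, I would argue by induction along the composition order: initial pulse, then the $111$ instances, then the $223$ instances, then the $100$ instances. The invariant is that each primitive begins from a valid interface and ends at one. For $100$ this is immediate, since the placement is unchanged. For $111$, the reflected instances keep the four pair sites inside the eight-column workspace. For $223$, the final site set equals the input site set. Each primitive therefore adds exactly its vector $v_{100}$, $v_{111}$ or $v_{223}$. By \eqref{eq:composition} the totals are exactly \rtm. Filler CZ pairs are static and share existing pulses, and single-qubit gates lie between stages, so neither changes the counts; the accepted grid supplies the workspace capacity. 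I take validity of each local witness from the Z3 SAT certificates in Table~\ref{tab:primitives}.

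\emph{Lower bound on $R$.} Each backbone qubit participates in a chain of $R_{\rm ref}$ totally ordered CZ gates. A single pulse applies at most one CZ to a given atom, since an atom is in at most one pair. Hence any dependency-preserving execution uses at least $R_{\rm ref}$ nonempty pulses.

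\emph{Lower bound on $T$.} I would count the consecutive backbone CZ stages at which some qubit's partner changes. In the construction these are the $n_{111}$ rematchings and the $2n_{223}$ new matchings $L_0$ and $L_1$, giving exactly $T_{\rm ref}$ partner changes. Consider a fixed qubit whose ordered CZ partner changes from $a$ to $b$. The pulse executing the earlier gate has the qubit co-located with $a$, and the pulse executing the later gate has it co-located with $b$. Two slots per site forbid both at once, so some nonzero movement must occur strictly between these two pulses. The main difficulty is that in an arbitrary execution these pulses need not be consecutive, and several partner changes could be served by one interval. I would handle this by assigning each partner change to a \emph{single} qubit chain: the $111$ prefix was deliberately placed on the same four qubits. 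Along one chain, successive partner changes occupy disjoint pulse windows, because the chain's gates are totally ordered. Each window must contain its own routing interval. This gives $T_{\rm ref}$ pairwise distinct intervals. For the $223$ steps, $L_0$ and $L_1$ each change the partner of role $0$, so role $0$'s chain again supplies disjoint windows.

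\emph{Lower bound on $M$ given minimum $R$.} When $R=R_{\rm ref}$, every backbone qubit's gates must occupy the pulses in exactly the chain order: the $k$-th pulse executes precisely the $k$-th backbone matching. The execution therefore decomposes pulse-by-pulse along the same segments as the construction. Movements between the pulses of a $111$ step must rematch, so each such step needs at least one movement. Each $223$ segment realizes $L_0$ then $L_1$ from some pair configuration. The UNSAT certificate at $K=2$, whose initial sites and orientations are free, shows that at least three movement layers are needed there. I expect the hardest point to be justifying that this local certificate applies inside a global execution. Layers outside the window, before or after it, cannot help, because the SMT instance already quantifies over all input placements of the window. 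Movements shared across segment boundaries must also be excluded; they are, since segment windows are separated by pulses. Non-backbone atoms only add constraints. Summing the local bounds over segments gives $M\ge n_{111}+3n_{223}=M_{\rm ref}$.
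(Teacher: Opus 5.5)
Your outline follows the paper's proof step for step. The $R$ bound comes from the length of each backbone chain, and the $T$ bound from disjoint gate-to-gate windows on one qubit's chain. The $M$ bound pins the $k$th backbone matching to the $k$th pulse when $R$ is minimal, then adds the per-segment costs, which are additive because windows share only boundary pulses. Those costs are one movement per $111$ rematching and three per $223$ window, the latter from the UNSAT certificate with free input sites and orientations. Your attainment induction just spells out what the paper asserts.

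The one step that fails as written is the $223$ part of the $T$ bound. Disjointness of windows comes only from the total order of a \emph{single} qubit's gates. You charge the $111$ partner changes to one of the four rematched qubits, but the $223$ changes to role~$0$. Role~$0$ is generally a different physical qubit. Since roles are reassigned to the current input slots at each instance, it may even differ between $223$ instances. Windows on different chains are not ordered by the dependencies, so they can share an inter-pulse interval. For example, suppose roles $0,\dots,3$ fall on the two pairs left static by the $111$ prefix. Then role~$0$'s $L_0$ gate does not depend on the rematched qubits at all. A dependency-preserving execution with extra pulses may then place role~$0$'s first $223$ window in the same interval as the rematched qubit's last $111$ window. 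One routing interval would be counted twice, so your count does not yet give $T\ge n_{111}+2n_{223}$.

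The repair is immediate and is exactly the paper's choice. $L_0$ and $L_1$ change the partner of \emph{every} role, not just role~$0$: role $0$ goes $1\to2\to4$, role $4$ goes $5\to6\to0$, and similarly for the others. So fix one qubit $q$ among the four rematched throughout the $111$ prefix. It changes partner once per $111$ instance and twice per $223$ instance, including at the junction into each $223$, since $L_0$ differs from the input pairing for every role. All $n_{111}+2n_{223}$ windows then lie on $q$'s own chain, are pairwise disjoint, and each contains a distinct routing interval.
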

\begin{proof}
Each backbone qubit participates in $R$ ordered CZ gates, requiring at least $R$ pulses. Choose a qubit from the four rematched throughout the $111$ prefix. It changes partner once per $111$ primitive and twice per $223$ primitive. Each change requires movement between successive gates on that qubit. These gate-to-gate intervals are disjoint, giving $T\geq n_{111}+2n_{223}$.

In an execution using exactly $R$ pulses, every backbone qubit must participate in every pulse. Its $j$th gate therefore executes in pulse $j$, synchronizing each complete backbone matching. Every $111$ transition then requires at least one movement, and each $223$ window requires at least three by the local SMT bound, which permits arbitrary initial pair sites and orientations. The windows share only boundary pulses, so their movement costs add: $M\geq n_{111}+3n_{223}$. The constructed witnesses attain all these bounds; substitution of~\eqref{eq:composition} yields the target tuple. Static filler pairs use the same pulses, and single-qubit gates preserve the backbone CZ order.
\end{proof}

Alongside the logical gates, the construction records the initial placement and the movements between gate stages. These records form the executable reference schedule used for resource comparison. Its mappings and execution information also support the controlled tests of compiler decisions illustrated in Section~\ref{sec:diagnosis}.

\subsection{Filler and Physical Consistency}
\label{sec:filler}
Filler gates complete the requested workload after the backbone fixes the resource targets. The requested gate counts are $\lceil d_1QD\rceil$ and $\lceil d_2QD/2\rceil$, of which the backbone supplies $4R$ CZ gates. To adjust the two-qubit load without additional movement, NAKO places selected non-backbone qubits in static, disjoint pairs. Under the global-pulse model, each such pair interacts at every existing Rydberg stage and therefore contributes $R$ CZ gates. With $p$ selected pairs, the achieved count is
\begin{equation}
G_{2,\rm actual}=R(4+p).
\end{equation}
Because each static pair contributes $R$ gates, arbitrary two-qubit requests may not be exactly reachable. The generator chooses a feasible integer $p$ whose contribution is close to the remaining CZ budget. Single-qubit gates extend logical depth and fill available dependency positions. The one-pass procedure records achieved counts and any deviation from the requests. Static filler introduces no additional routing or movement: its atoms remain fixed, and its CZ gates reuse the existing pulses.

Export checks compare the declared gate pairs with the pairs induced by each reference pulse, check gate coverage and qubit order, and recompute \refcost. The generation record retains requested and achieved counts. This compact validation connects the logical workload, its physical witness, and the subsequent measurements.

\subsection{From Resource Evaluation to Controlled Diagnosis}
\label{sec:evaluation-diagnosis}
NAKO pairs each generated circuit with attainable resource targets and an executable reference schedule. Building on these outputs, we design a reference-guided workflow for compiler evaluation and diagnosis, comprising resource comparison, controlled intervention, and execution validation.

Resource comparison measures the tested compiler's $R$, $T$, and $M$ against the targets for the same circuit. The resulting profile identifies the resources with excess cost. Inspection of the compiled schedule then guides the selection of a decision to investigate, such as initial placement, cross-stage layout reuse, or gate-batch selection.

The intervention stage compares alternatives to the selected decision under matched conditions. To examine initial placement, the reference layout supplies an alternative mapping. Recompiling the same circuit with unchanged hardware and subsequent compiler settings measures the effect of this substitution. To examine a later scheduling or routing choice, we retain the executed operations and fix the compiler's complete physical state, including atom positions and trap assignments. Alternative gate and movement sequences are then constructed from this common starting point.

In the validation stage, the alternative executions are checked through the tested compiler's instruction generator and physical replay, or by fixing candidate assignments in its constraint model. Checks cover gate identities and per-qubit order, physical feasibility, and resource costs. A validated cost reduction establishes an improvement opportunity under the stated controls. Section~\ref{sec:diagnosis} applies this workflow to Enola and \DPQA.

\section{Experimental Methodology}
The experiments assess NAKO's ability to expose compiler resource overhead and support controlled diagnosis, together with the time and memory required to generate larger benchmarks.

\subsection{Benchmarks and Evaluation Tasks}
All benchmark circuits are generated by NAKO with their resource targets and executable reference schedules. Experiments A, B, and C examine compiler responses to circuit depth, qubit count, and CZ load. Experiments D and E examine primitive compositions and feasible resource sweeps, while F supplies four eight-qubit constructions for investigating individual compilation decisions. CZ-load comparisons use achieved gate counts.

Experiment G examines the diagnostic value of the reference initial layout by compiling the same five circuits under default initialization and the NAKO mapping. Experiment H separately measures the scalability of NAKO's benchmark generation. Experiment I compiles five fixed circuits across four hardware grids to examine resource use and completion under different capacities.

Table~\ref{tab:campaigns} summarizes the experimental design. Compiler experiments use five runs per parameter setting, except F, which contains one circuit per construction. H uses three repetitions per width. The design gives 324 compilation tasks per compiler.

\begin{table}[t]
\caption{NAKO-generated benchmark families and evaluation tasks. Run counts are per compiler except for H.}
\label{tab:campaigns}\centering\small
\begin{tabular}{@{}c>{\raggedright\arraybackslash}p{.60\columnwidth}rr@{}}\toprule
ID & Experimental factor & Points & Runs\\\midrule
A & Circuit depth: $D=10,20,30,40,50$ & 5 & 25\\
B & Qubit count: $Q=18,20,30,40,50$ & 5 & 25\\
C & Achieved CZ count:\newline 64, 80, 96, 112, 128, 144 & 6 & 30\\
D & Target $R$, $T$, or $M$ & 18 & 90\\
E & Exploratory resource sweeps & 24 & 120\\
F & Small primitive compositions & 4 & 4\\
G & Default and reference initial mappings & 2 & 10\\
H & Generator width:\newline $Q=100,500,1000,2000$ & 4 & 12$^{\dagger}$\\
I & Logical grid:\newline $2\times8$, $2\times10$, $2\times12$, $2\times16$ & 4 & 20\\\bottomrule
\multicolumn{4}{l}{$^{\dagger}$H measures NAKO generation.}
\end{tabular}
\end{table}

\subsection{Compilers and Experimental Settings}
The evaluation uses Enola~\cite{tan2025enola} and \DPQA~\cite{tan2024olsq}, two neutral-atom compilers with different compilation approaches. Enola separates gate scheduling, atom placement, and routing in a heuristic pipeline. \DPQA{} uses SMT-based layout and movement synthesis. These approaches provide complementary settings for examining the resource gaps and compiler decisions that NAKO can reveal.

Compiler experiments use the CZ portion of each generated circuit, retaining all qubits and the gate order on each qubit. This preserves the dependency condition used in NAKO's resource guarantees. Enola uses routing based on maximal independent sets and dynamic layout updates between Rydberg stages; \DPQA{} uses its greedy compilation mode with gate commutation disabled. In G, only the initial mapping changes between the paired runs; subsequent compiler settings remain fixed.

Outside I, both compilers use $16\times16$ spatial grids with 16 AOD rows and columns. In I, an $r\times c$ logical site grid is represented by a $2c\times r$ spatial grid in Enola and a $c\times r$ grid in \DPQA, with corresponding AOD dimensions. These configurations follow the compilers' respective representations of the two-slot site model.

Each compilation has a time limit of 10\,000~s, with up to eight compilations running concurrently per compiler. Compiler experiments run on Linux with Python 3.10.19; \DPQA{} uses Z3 5.1.0.

Generation scalability is measured separately on Windows with Python 3.12.14 and a fixed $64\times64$ site grid. Timing includes circuit construction, serialization, and QASM reload and statistics collection. Memory measurements report peak traced Python allocations.

\section{Evaluation Results}
The experiments compare compiler resource costs with NAKO's attainable targets and examine how the resulting gaps depend on workload size, interaction structure, initial placement, and hardware capacity. These comparisons guide the controlled investigations in Section~\ref{sec:diagnosis}. Generation time and memory are measured separately to assess the cost of producing larger benchmarks.
\subsection{Compilation Outcomes and Resource Costs}
Enola completes all 324 planned runs. OLSQ-DPQA completes 312; its 12 timeouts occur in I. The main resource summary covers experiments A--F, with 294 runs per compiler, and reports arithmetic means of per-instance resource ratios. Enola attains $R=R_{\rm ref}$ in every case. Its mean $T/T_{\rm ref}$ is 1.298 and $M/M_{\rm ref}$ is 36.914. OLSQ-DPQA has mean ratios 2.366, 2.984, and 2.073. Tables~\ref{tab:enola-summary} and~\ref{tab:dpqa-summary} preserve the separate resource profiles.

\begin{table}[t]\caption{Enola resource evaluation over 294 main entries.}\label{tab:enola-summary}\centering\small
\begin{tabular}{@{}lrrr@{}}\toprule
Resource & Mean ratio & Minimum ratio & Reference hits\\\midrule
$R$ & 1.000 & 1.000 & 294/294\\
$T$ & 1.298 & 1.000 & 34/294\\
$M$ & 36.914 & 15.571 & 0/294\\\bottomrule
\end{tabular}\end{table}
\begin{table}[t]\caption{OLSQ-DPQA resource evaluation over 294 main entries.}\label{tab:dpqa-summary}\centering\small
\begin{tabular}{@{}lrrr@{}}\toprule
Resource & Mean ratio & Minimum ratio & Reference hits\\\midrule
$R$ & 2.366 & 1.125 & 0/294\\
$T$ & 2.984 & 1.333 & 0/294\\
$M$ & 2.073 & 1.333 & 0/294\\\bottomrule
\end{tabular}\end{table}

These profiles identify different optimization priorities. Enola consistently packs the logical program into the reference number of entangling stages, while its expanded movement stream remains large. OLSQ-DPQA uses fewer movement layers in every jointly successful paired run, with additional entangling stages and routing intervals. The reference tuple allows these outcomes to be interpreted separately and directs the diagnostic probes in Section~\ref{sec:diagnosis}.

\subsection{Depth and Width Scaling}
Experiment A tests whether normalized resource gaps persist as the workload becomes deeper. At $Q=20$, logical depth increases from 10 to 50 while the reference tuple scales proportionally from $(8,6,9)$ to $(40,30,45)$. Figure~\ref{fig:expA} shows that Enola preserves the reference $R$ across the sweep; OLSQ-DPQA's normalized Rydberg and routing costs remain above two at most points. Absolute movement rises from 215.2 to 1110.0 for Enola and from 18.0 to 89.6 for OLSQ-DPQA. The persistent relative profiles indicate that longer workloads accumulate the same resource tradeoffs.

\figwide{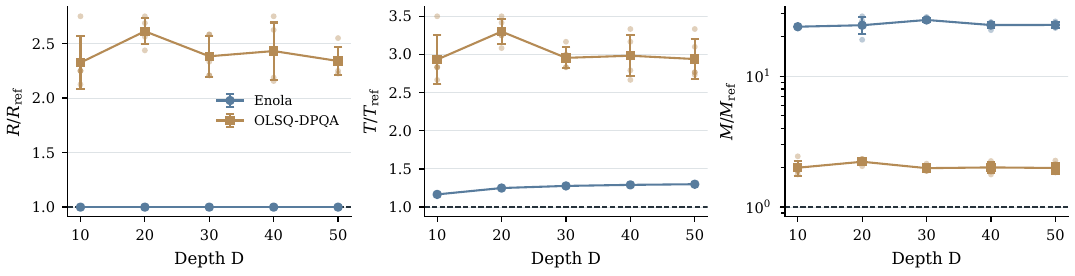}{Experiment A: depth scaling. From left to right, Rydberg, routing-interval, and movement ratios use the constructed reference as denominator. Points show individual runs, connected markers show means, and error bars show sample standard deviations. The dashed line marks reference attainment.}{fig:expA}

Experiment B tests how compiler costs change with register size when a reference execution with the same resource costs remains available. It fixes $D=20$ and reference $(16,12,18)$ while increasing $Q$ from 18 to 50. Figure~\ref{fig:expB} shows Enola's mean $M$ increasing from 349.0 to 972.8. OLSQ-DPQA's $M$ stays between approximately 35 and 43, while recorded compilation time rises from 30.54 to 581.70 s at the endpoints. This workload family exposes a growing movement burden in one implementation and a growing search burden in the other. Achieved density varies from 0.352 to 0.400 because static filler pairs contribute gates in units of $R$.

\figwide{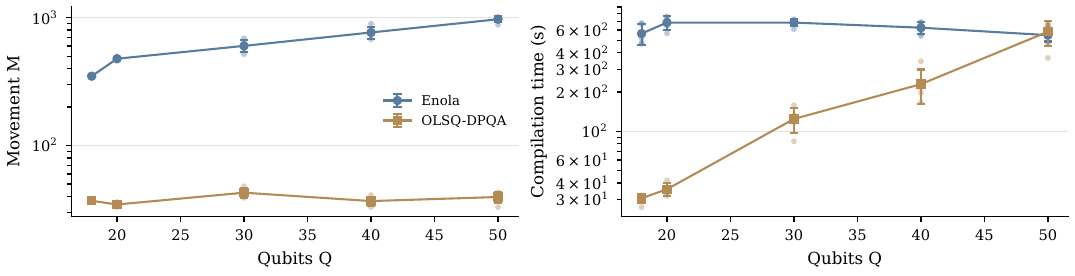}{Experiment B: width scaling at fixed reference $(16,12,18)$. Movement and recorded compilation time are shown separately on logarithmic vertical axes. Samples at $Q=18$ share the same two-qubit program.}{fig:expB}

\subsection{Interaction Load and Primitive Composition}
Experiment C tests the cost of additional interactions at fixed circuit width, depth, and reference resources. It varies achieved CZ count from 64 to 144 at $Q=20$, $D=30$, and reference $(16,12,18)$. Figure~\ref{fig:expC} shows that Enola retains $R/T=16/15$ while mean movement increases from 366.0 to 822.8. OLSQ-DPQA's mean tuple changes from $(37.0,33.0,33.0)$ to $(48.4,43.4,44.2)$. The controlled backbone keeps the reference fixed while additional static pairs expose how each compiler handles the surrounding interaction load.

\figwide{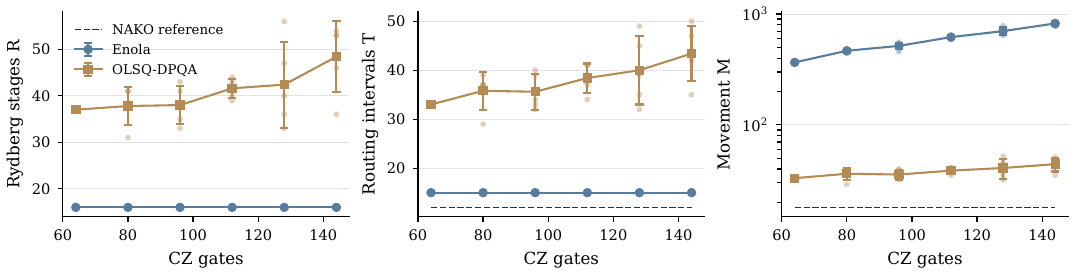}{Experiment C: achieved interaction load. Horizontal coordinates are the actual CZ counts; the corresponding densities span 0.2133--0.4800. Dashed curves give the fixed reference. Five runs are shown per point; these use one distinct two-qubit program at 64 CZ gates, four at 80, and five at the remaining points.}{fig:expC}

Experiment D tests sensitivity to primitive composition at fixed logical workload size. All circuits use $Q=20$ and $D=30$; the $M$ and $T$ branches also retain 128 CZ gates. In the $M$ branch, reference $R=16$ and $T=12$ are fixed. Increasing reference $M$ from 12 to 18 replaces $111$ with $223$ primitives; OLSQ-DPQA's mean tuple increases from $(22.4,20.8,20.8)$ to $(38.8,36.2,36.6)$. In the $T$ branch, reference $R=16$ and $M=18$ are fixed. Increasing reference $T$ from 12 to 15 reduces $n_{223}$ from six to three and increases $n_{111}$ from zero to nine. The mean OLSQ-DPQA tuple falls from $(40.0,38.2,38.8)$ to $(29.0,28.0,28.0)$.

\figwide{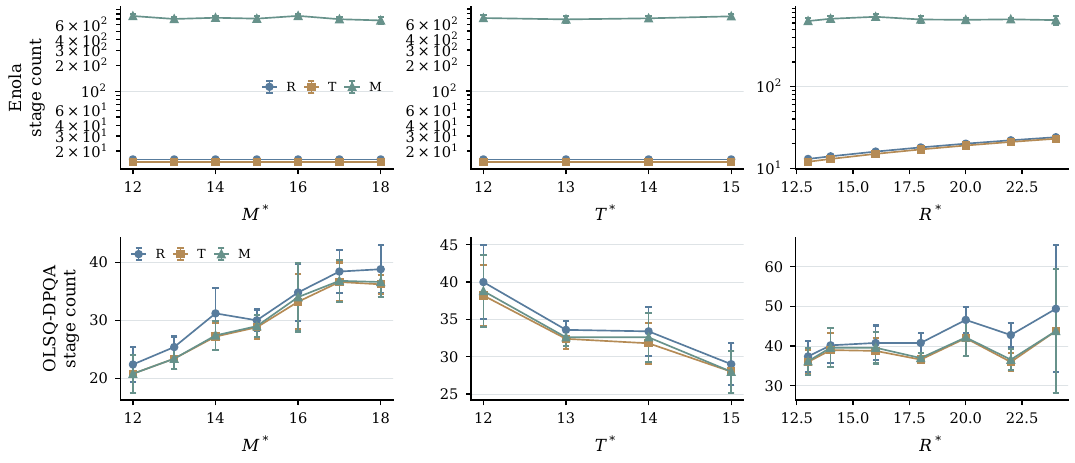}{Experiment D: separate sweeps of target $M$, $T$, and $R$ (left to right). The top row shows Enola and the bottom row OLSQ-DPQA. Marker shapes and colors identify the measured resource. Enola uses logarithmic vertical axes to show its Rydberg/routing counts and expanded movement count together. The $M$ and $T$ sweeps have 128 CZ gates at every point; the $R$ sweep has 110--128.}{fig:expI}

In the $T$ branch, these results correspond to mean $R/R_{\rm ref}$ decreasing from 2.50 to 1.81 and $M/M_{\rm ref}$ from 2.16 to 1.56. The reductions are measured against unchanged attainable $R$ and $M$ costs, with no decrease in qubit count, depth, or CZ count. They therefore reveal differences in compiler overhead across primitive compositions at the same workload size. The $R$ branch adds $100$ stages at fixed reference $T$ and $M$, while achieved CZ counts range from 110 to 128.

\subsection{Exploratory Resource Sweeps}
Experiment E examines compiler behavior under joint variations of resource targets and gate load. It uses two requested density vectors, $(d_1,d_2)=(0.27,0.36)$ and $(0.51,0.40)$. Each density uses two branches with six parameter points and five samples, giving 120 inputs per compiler. The movement branch uses $(M,T)=(12,8),(14,10),(16,11),(18,12),(20,14),(22,15)$ at $R=16$. The Rydberg branch holds $T=12$ and $M=18$ while varying $R=13,14,16,18,20,22$.

Figure~\ref{fig:expG} reports the resulting movement costs against the reference at both densities. The movement branch varies $T$ jointly with $M$, and filler quantization changes achieved gate counts in the Rydberg branch. These curves show the response to feasible combinations of resource targets and interaction load. Experiment D's fixed-load branches provide the more focused composition controls for interpreting that response.

\figwide{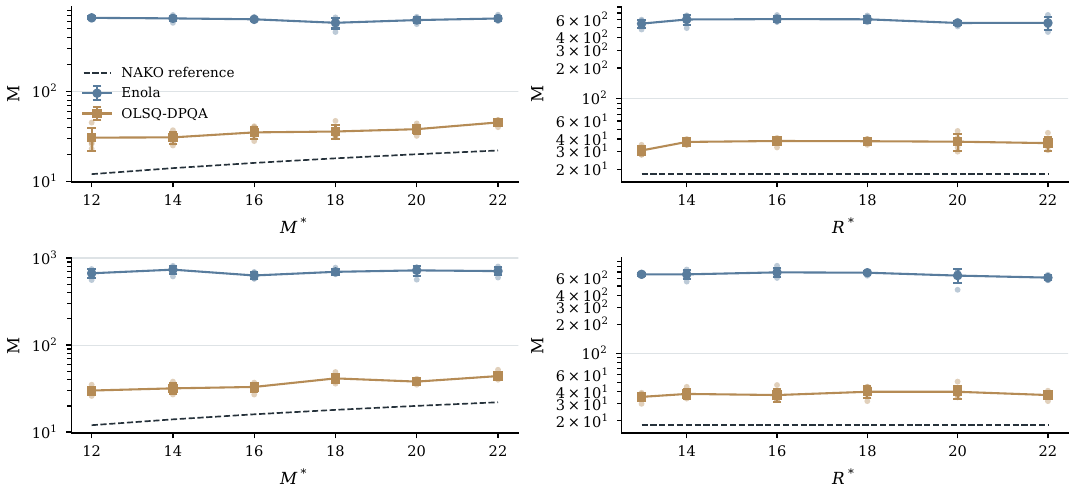}{Experiment E: movement counts under exploratory target sweeps. Rows correspond to requested densities $(0.27,0.36)$ and $(0.51,0.40)$; columns correspond to the movement and Rydberg branches. Dashed lines indicate the constructed reference. Movement sweeps change routing intervals jointly; Rydberg sweeps change achieved gate counts through filler quantization.}{fig:expG}

\subsection{Small Primitive Compositions}
Experiment F tests whether resource gaps already arise in small, directly inspectable primitive constructions. Its four eight-qubit circuits contain only $111$ primitives, one $223$ primitive, a mixed composition, or consecutive $223$ primitives. Enola attains the reference $R$ in all four but uses 53--154 movement layers, compared with 3--9 in the references. \DPQA{} uses 5--16 Rydberg stages, compared with reference counts of 3--7. These circuits form the inputs to the \DPQA{} fixed-state investigations in Section~\ref{sec:diagnosis}.

\subsection{Reference Initial Mappings}
Experiment G examines how initial placement affects the measured resource gaps. Each of five circuits, with reference $(16,12,18)$, is compiled using its default mapping and the initial mapping from its NAKO reference. Hardware and subsequent compiler settings remain fixed, and each compiler generates its own subsequent schedule.

Figure~\ref{fig:expJ} shows Enola's mean movement falling from 758.8 to 553.0, a 27.12\% reduction, with all five pairs improving. Enola retains $R=16$ and $T=15$ in both modes. OLSQ-DPQA's mean tuple changes from $(44.0,41.4,41.8)$ to $(33.6,31.4,31.6)$, reductions of 23.64\%, 24.15\%, and 24.40\%. Four of its five pairs improve all three resources; one pair increases slightly.

Against the common reference, Enola's mean $M/M_{\rm ref}$ decreases from 42.16 to 30.72 without increasing $R$ or $T$. Initialization substantially changes the magnitude of its movement gap, while movement remains well above the reference in both modes. The comparison between compilers also persists: under both initializations, OLSQ-DPQA uses fewer movement layers than Enola but more Rydberg stages and routing intervals.

\figwide{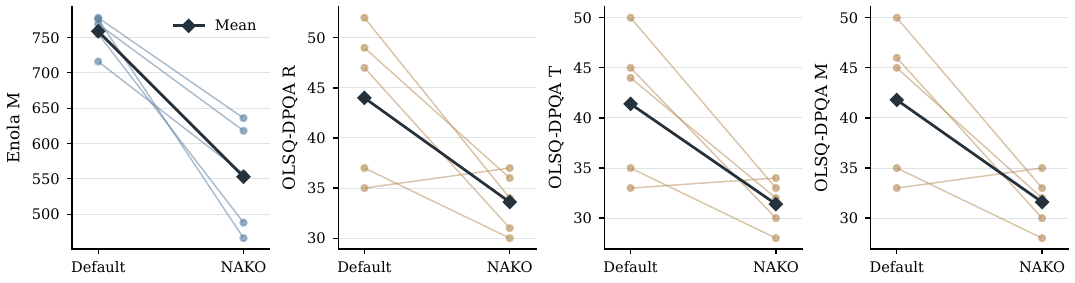}{Experiment G: paired default and NAKO mappings. The left plot shows Enola movement; the remaining plots show OLSQ-DPQA Rydberg stages, routing intervals, and movement. Thin lines connect each identical circuit across modes. Black diamonds give arithmetic means.}{fig:expJ}

\subsection{Generator Scalability}
Experiment H measures whether generation remains practical as qubit count increases from 100 to 2000. It uses a fixed $64\times64$ site grid and the same reference $(16,12,18)$ at all widths. Figure~\ref{fig:expK} shows mean generation times of 0.357, 1.354, 2.556, and 4.970 s for 100, 500, 1000, and 2000 qubits. Peak traced Python allocations increase from 0.51 to 6.22 mebibytes (MiB). The measurements include generation, serialization, and QASM reload/statistics, with three repeats per point. Direct witness construction makes large stress-test inputs practical without invoking a layout optimizer during generation.

\figsingle{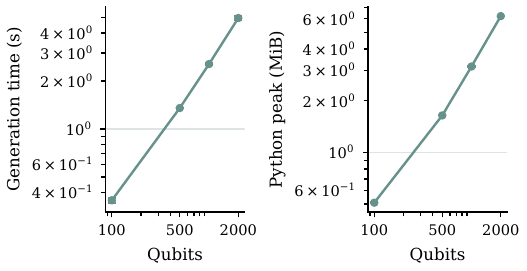}{Experiment H: generation time and peak Python allocation on a fixed grid. Error bars show the standard deviation over three repeats. Both axes use logarithmic scales.}{fig:expK}

\subsection{Hardware Capacity and Completion}
Experiment I tests how hardware space affects compilation quality and completion for an unchanged workload. It compiles the same five circuits on the four grids in Table~\ref{tab:campaigns}. Enola completes all 20 runs. OLSQ-DPQA completes 0/5 at $2\times8$, 0/5 at $2\times10$, 3/5 at $2\times12$, and 5/5 at $2\times16$. Figure~\ref{fig:expL} plots completion and the movement counts of successful runs. Enola's mean movement varies nonmonotonically with width. OLSQ-DPQA's 12 timeouts consistently end in repeated failed local queries, motivating the reliability branch in Section~\ref{sec:diagnosis}.

\figwide{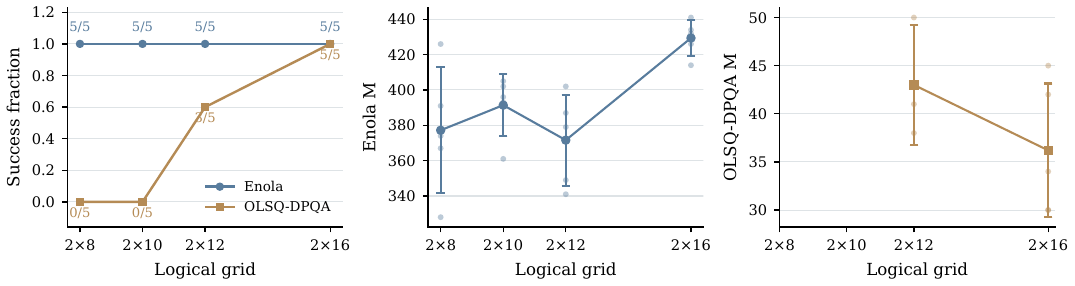}{Experiment I: completion fractions and movement on identical circuits across logical grids. The left plot labels successful runs out of five; movement plots contain successful runs only. OLSQ-DPQA has three observations at $2\times12$ and five at $2\times16$.}{fig:expL}

\section{Reference-Guided Compiler Diagnosis}
\label{sec:diagnosis}
The NAKO evaluation reveals different resource gaps in the two compilers. Enola attains the reference $R$ throughout the main workloads, while its mean movement ratio is approximately 37. \DPQA{} exceeds the reference $R$ and $T$ on the small primitive constructions and exhibits capacity-dependent completion on the hardware sweep. We trace these observations through controlled comparisons of compiler decisions, using reference layouts, small benchmark constructions, and validated alternative executions.

\subsection{Enola: Locating and Prioritizing Movement Losses}
The movement gap persists across NAKO's depth, width, and interaction-load experiments. Experiment G further shows that supplying the reference initial layout reduces Enola's movement while preserving its $R$ and $T$. Figure~\ref{fig:enola-diagnosis} follows this observed placement sensitivity through cross-stage layout controls, placement-score comparisons, and gathering-batch tests.

\figwide{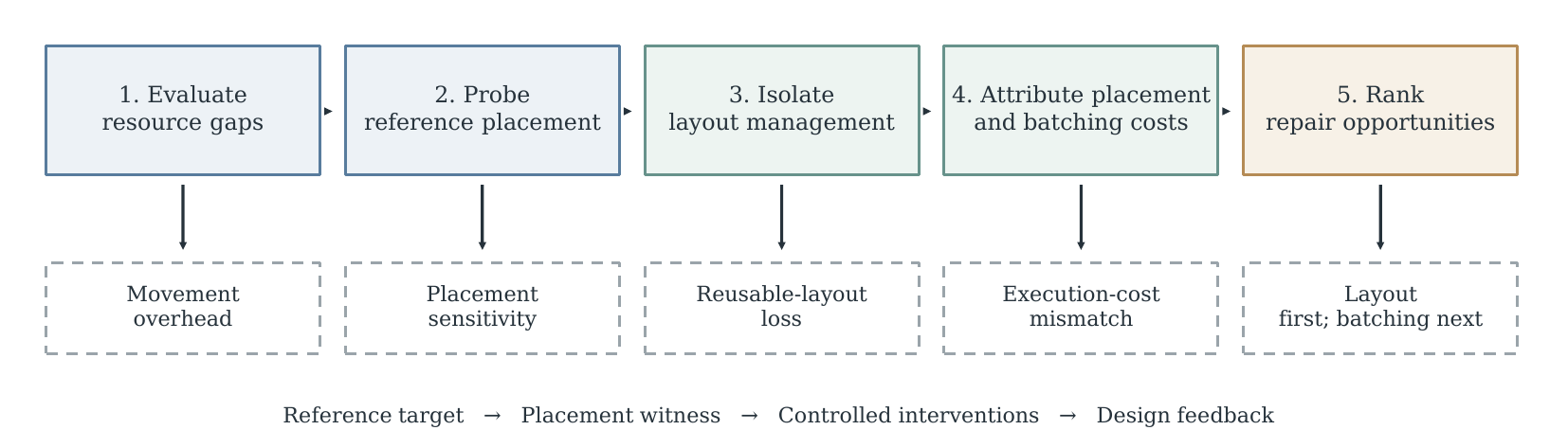}{NAKO's diagnostic workflow for Enola. Numbered process steps progress from resource evaluation through controlled placement and layout probes to attribution and repair priorities. The lower objects summarize findings produced by each step. Detailed interventions and measurements appear in the text and tables.}{fig:enola-diagnosis}

\subsubsection{Reference evaluation and placement probe}
The five G circuits each contain 128 CZ gates and have reference $(16,12,18)$. Replacing Enola's default initial layout with the NAKO layout reduces mean $M$ from 758.8 to 553, with $R=16$ and $T=15$ in both modes. Enola subsequently updates atom placement between stages and gathers gate pairs in batches of operations compatible with the AOD constraints. To examine how the supplied layout affects the full execution, we compare dynamic updates with repeated use of the initial configuration.

\subsubsection{Cross-stage layout management}
Enola's return-to-initial-layout option restores the initial atom positions after each Rydberg stage, giving successive stages the same starting layout. We apply both dynamic and return policies to the default and NAKO layouts. The scheduled CZ layers, hardware, router, and instruction generator remain fixed. Table~\ref{tab:enola-layout} reports the resulting costs.

\begin{table}[t]\caption{Enola layout and cross-stage policy controls on the same five G inputs. Fidelity is estimated using Enola's model.}\label{tab:enola-layout}\centering\small
\begin{tabular}{@{}llrr@{}}\toprule
Layout & Boundary policy & Mean $M$ & Fidelity\\\midrule
Default & Dynamic & 758.8 & 0.171\\
Default & Return & 777.2 & 0.167\\
NAKO & Dynamic & 553.0 & 0.188\\
NAKO & Return & 160.0 & 0.235\\\bottomrule
\end{tabular}\end{table}

With the NAKO layout, returning after each stage reduces mean $M$ from 553 to 160, approximately 71.1\%. With the default layout, the same policy increases $M$ from 758.8 to 777.2. The effect of preserving a configuration therefore depends on its layout. Repeated use of the NAKO configuration provides a substantially cheaper execution under Enola's own router, while dynamic updates increase the movement generated from that same initial layout. The default configuration remains expensive under both policies. These results identify a strong dependence between layout choice and the cost of cross-stage updates.

\subsubsection{Placement objective and execution cost}
The fixed-boundary executions produce a large cost difference between the two layout families: mean $M$ is 777.2 for the default layouts and 160 for the NAKO layouts. To examine how the compiler evaluates these configurations, we compute its placement objective on both. The objective minimizes a weighted sum of gate-endpoint distances over upcoming CZ layers. It assigns approximately 54.6 to the default layouts and 75.6 to the NAKO layouts, ranking the more expensive executions ahead of the cheaper ones.

This ranking reversal on the NAKO inputs leads to an independent check using a single four-CZ matching. One layout scores 9.37 and produces 16 movement layers with a modeled duration of about 1.00~ms; another scores 9.82 and produces 14 movement layers with a duration of about 0.75~ms. Each layout uses its minimum gathering-batch count. The lower-distance layout remains more expensive after batch-count optimization within each layout. Gate-endpoint distance evaluates pairs individually, while physical execution also depends on compatible gathering operations and their loading and unloading sequence. The two comparisons expose a concrete mismatch between the placement proxy's ordering and the costs of executable plans.

\subsubsection{Routing batches and repair priorities}
On the same five NAKO inputs with fixed stage boundaries, Enola's original selector uses averages of 34 gathering batches for the default layouts and 20 for the NAKO layouts. To measure the contribution of batch selection, we compare its maximal-independent-set heuristic with enumeration of the minimum batch count for each layout. Enumeration considers both gathering directions of each gate under the original compatibility test. Both selectors use the same router and instruction generator. Table~\ref{tab:enola-batch} summarizes the results.

\begin{table}[t]\caption{Enola batching controls with fixed stage boundaries. Minimum batches refers to the number of gathering batches.}\label{tab:enola-batch}\centering\small
\begin{tabular}{@{}llrr@{}}\toprule
Layout & Selector & Mean batches & Mean $M$\\\midrule
Default & Original & 34.0 & 777.2\\
Default & Minimum batches & 32.8 & 754.4\\
NAKO & Original & 20.0 & 160.0\\
NAKO & Minimum batches & 20.0 & 160.0\\\bottomrule
\end{tabular}\end{table}

For the default layouts, enumeration reduces mean batch count from 34 to 32.8 and mean $M$ from 777.2 to 754.4. Mean modeled fidelity improves by approximately 1.8\%; the strongest sample removes three batches and reduces movement by 8.2\%. For the NAKO layouts, both selectors produce 20 batches and $M=160$. Batch selection therefore provides additional savings for the default configurations, while the preserved NAKO configurations already attain the minimum batch count.

The controlled executions preserve the scheduled CZ layers and pass physical replay of trap activation, coordinate updates, global-pulse gate pairs, and final states. Additional quantum-volume (QV) and QFT matching-slice tests use 40 preset layouts each. Enumeration removes excess gathering batches in nine QV layouts and improves movement and modeled fidelity in each. The original selector attains the minimum batch count in all 40 tested QFT layouts.

The movement gap measured against NAKO's targets is traced to placement-score mismatch and costly updates of the reference layout. Reusing that layout reduces mean $M$ from 553 to 160 through Enola's own router and instruction generator, while minimum-batch selection provides smaller, layout-dependent gains (Tables~\ref{tab:enola-layout} and~\ref{tab:enola-batch}). NAKO supplies both the attainable resource target that directs the investigation and the executable layout used to test compiler choices. These controls yield validated movement reductions and establish layout selection and cross-stage reuse as the primary improvement priorities for the five inputs.

\subsection{OLSQ-DPQA: Connecting Quality and Reliability}
NAKO's four small primitive constructions exhibit excess Rydberg stages and routing intervals even at eight qubits. The hardware-capacity experiment additionally shows declining completion on tighter grids for unchanged circuits. Figure~\ref{fig:dpqa-diagnosis} develops these observations into two investigations: the cost of accepted gate batches and the progression of local search after failure.

\figsingle{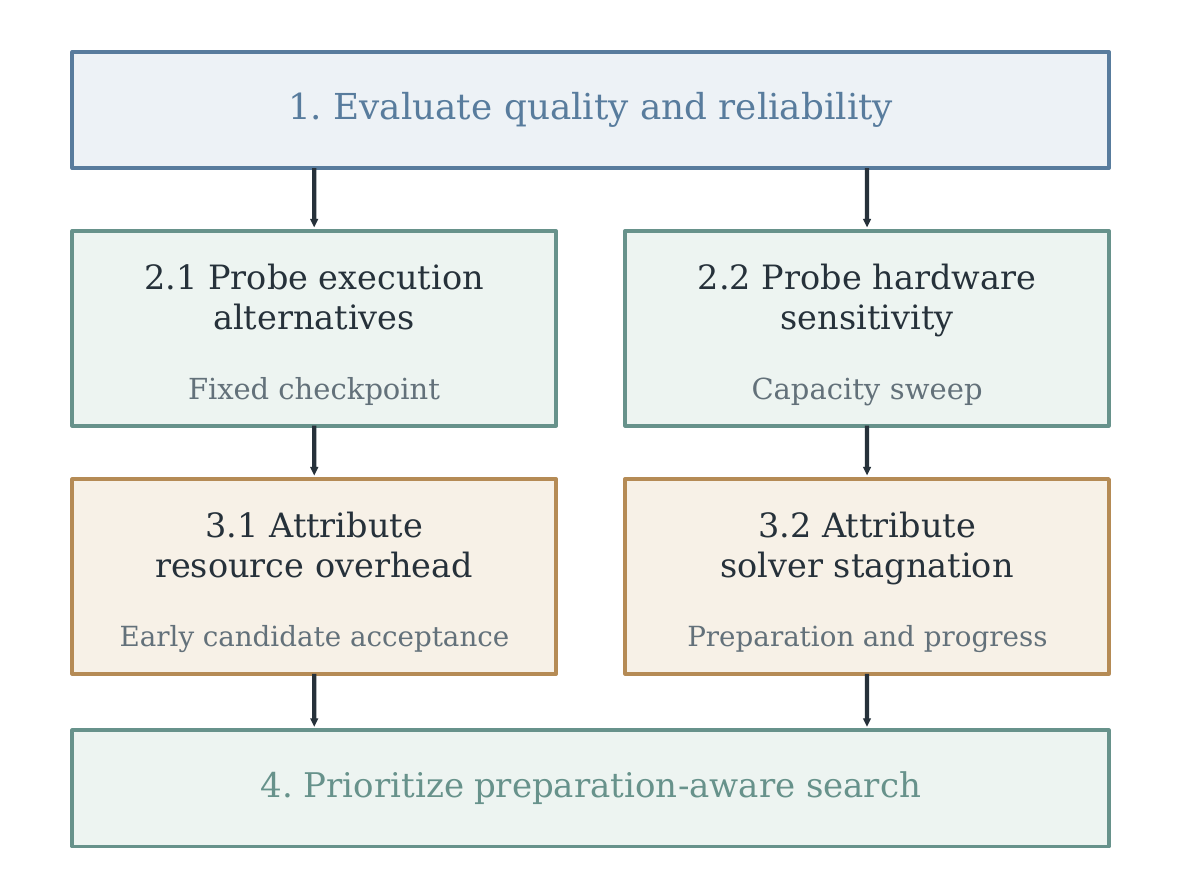}{NAKO's diagnostic workflow for OLSQ-DPQA. A common evaluation branches into fixed-checkpoint execution probes and hardware-capacity sweeps. The branches attribute resource overhead and solver stagnation, then converge on preparation-aware search as a design priority. Step numbers indicate the diagnostic sequence.}{fig:dpqa-diagnosis}

\subsubsection{Quality branch: fixed-state alternatives}
The one-$223$ input in experiment F contains eight qubits and twelve CZ gates arranged in three distinct successive matchings. Its NAKO reference is $(3,2,3)$, while \DPQA{} produces $(5,4,4)$. To examine where the additional pulses arise, we trace its gate-batch decisions and select the state after the third pulse, where four CZ gates remain ready for execution.

We retain the complete executed prefix and fix atom coordinates, trap assignments, and AOD row and column identifiers at this checkpoint. These variables preserve the physical starting conditions and shared-motion groups for the subsequent operations. \DPQA{} selects a batch through a local constraint query whose horizon specifies the number of modeled state transitions.

From this state, one-step queries are unsatisfiable when required to execute four or three gates and satisfiable for two gates. The controller accepts the two-gate batch and executes the remaining two gates after another movement. Its suffix places a pulse after each of two movement layers. A candidate generated under the same physical constraints uses the first movement to prepare the geometry and the second to bring all four gate pairs together for one pulse. Both suffixes use two movement layers, while the preparatory candidate combines them into one routing interval before a single Rydberg stage. The complete program consequently changes from $(5,4,4)$ to $(4,3,4)$.

The resource reduction is obtained with the circuit, prefix, complete physical state, hardware, and movement budget fixed. Inspection of the controller explains the observed choice: it decreases the required gate count within the one-step query and accepts the first feasible bound. It advances to a two-step horizon after every positive one-step bound fails. At this checkpoint, the feasible two-gate batch triggers acceptance before the longer preparation candidate enters the search.

\begin{table}[t]\caption{OLSQ-DPQA fixed-checkpoint continuations. Each alternative saves one Rydberg stage and one routing interval at unchanged movement.}\label{tab:dpqa-quality}\centering\small
\begin{tabular}{@{}lccc@{}}\toprule
Construction & Reference & Original & Alternative\\\midrule
All-$111$ & $(4,3,3)$ & $(6,5,5)$ & $(5,4,5)$\\
One-$223$ & $(3,2,3)$ & $(5,4,4)$ & $(4,3,4)$\\
Mixed & $(6,5,7)$ & $(16,15,15)$ & $(15,14,15)$\\
Consecutive-$223$ & $(7,6,9)$ & $(16,15,15)$ & $(15,14,15)$\\\bottomrule
\end{tabular}\end{table}

Table~\ref{tab:dpqa-quality} repeats the comparison across all four F constructions. Each alternative saves one Rydberg stage and one routing interval at unchanged movement. The complete candidate programs preserve gate identities, per-qubit order, and the gate-pair relation induced at each global pulse. Fixing candidate assignments in the compiler's constraint model confirms feasibility. The repeated result connects the observed $R/T$ gaps to a specific acceptance rule: immediate partial execution receives priority over preparation that enables a larger joint execution.

\subsubsection{Reliability branch: preparation and progress}
Experiment I compiles the same five NAKO circuits across progressively tighter grids. \DPQA{} completes all five on the widest grid and times out on all five on the two tightest grids. Examining the 12 timeout traces reveals thousands of repeated local failures after the last compilation progress. The controller repeatedly returns to a two-step query. Its source resets the horizon to two whenever the gate bound reaches zero, reproducing that search horizon at the unchanged execution state.

The dependence on hardware capacity directs the analysis to the query's site-occupancy constraints. Its interaction-exactness condition applies at every modeled state after the initial checkpoint: co-located atom pairs must execute gates at that state. For a matching of ready gates, $G_s$ executed gates pair $2G_s$ atoms, while the remaining atoms occupy individual sites. A state with $N$ atoms therefore occupies at least $N-G_s$ sites. With $S$ available sites,
\begin{equation}
G_s\geq\max(0,N-S).
\label{eq:capacity}
\end{equation}
Across $k$ encoded states, each ready gate is executed once, so the query requires at least $k\max(0,N-S)$ ready gates. At the $2\times8$ checkpoint of sample I-4, 20 atoms share 16 sites and six gates are ready. Each encoded state requires at least four gates, giving a requirement of eight across the two-state query. This exceeds the available six gates and explains the query's unsatisfiability at that checkpoint.

To examine state representation and search horizon separately, we construct a four-qubit, one-CZ control with a fixed initial physical state. The queries vary hardware width, horizon, and the condition under which interaction exactness applies. In the pulse-guarded model, exactness applies when a Rydberg pulse is active; movement-only states can retain co-located atoms while preparing subsequent gates. The AOD motion, transfer, ordering, and crowding constraints are retained. Table~\ref{tab:dpqa-reliability} reports the query outcomes.

\begin{table}[t]\caption{OLSQ-DPQA preparation-state controls. Each triple gives the results at horizons one, two, and three. U denotes unsatisfiable and S denotes satisfiable.}\label{tab:dpqa-reliability}\centering\small
\begin{tabular}{@{}ccc@{}}\toprule
Width & Original exactness & Pulse-guarded exactness\\\midrule
3 & U / U / U & U / U / S\\
4 & U / S / S & U / S / S\\
5 & U / S / S & U / S / S\\\bottomrule
\end{tabular}\end{table}

At width three, the original model is unsatisfiable at horizons one, two, and three. With pulse-guarded exactness, horizon three produces a feasible execution, while horizons one and two remain unsatisfiable. Fixed-assignment replay confirms the three-step witness. Preparation in this control therefore depends on both the admitted intermediate states and the explored horizon. Tracing the original controller with the pulse guard enabled records a failed one-step query followed by repeated identical two-step queries. The feasible three-step execution identifies the longer horizon needed to progress from that same state.

\subsubsection{A unified design direction}
The quality and reliability results identify three linked decisions: accepting a gate batch, representing preparation states, and advancing search after failure. The resulting design priorities are to compare immediate and preparatory candidates by incremental $R$, $T$, and $M$, apply interaction exactness under explicit pulse activation, and expand the horizon after a failed state--horizon query.

Across the four NAKO constructions, validated suffix replacements save one Rydberg stage and one routing interval at unchanged movement (Table~\ref{tab:dpqa-quality}). These improvements are obtained from the compiler's own checkpoints and locate a recoverable part of the $R/T$ gap in its batch-acceptance rule. The compact-hardware controls additionally identify preparation-state encoding and horizon progression as causes of local search failure (Table~\ref{tab:dpqa-reliability}). NAKO's attainable targets reveal the resource excess, and its small circuit families support the fixed-state comparisons that turn that observation into lower-cost executions and specific compiler design changes.

\section{Conclusion}
NAKO evaluates neutral-atom compilers through parameterized circuits with attainable R, T, and M targets and executable reference schedules. Under dependency-preserving execution, its backbone and composable primitives establish minimum R and T, with M minimal among schedules attaining minimum R. Comparisons with these references quantify compiler resource gaps, while the benchmark families support systematic evaluation of workload sensitivity. Direct construction also makes large-scale testing practical, producing 2000-qubit benchmarks in about five seconds.

The observed resource gaps guide controlled investigations of compiler decisions. Reference-layout comparisons expose placement-score mismatch and costly cross-stage updates in Enola; validated interventions reduce mean M by 78.9\% on five matched circuits. The OLSQ-DPQA study links extra stages to batch acceptance and local search failures to preparation-state representation and horizon progression. Four validated suffix replacements each save one Rydberg stage and one routing interval at unchanged movement. These cases show how NAKO's attainable targets and physical references connect measured resource overhead to specific compiler decisions and verified execution improvements.

\bibliographystyle{IEEEtran}
\bibliography{references_tcad}
\end{document}